\documentclass[11pt]{article}
\usepackage[T1]{fontenc}
\usepackage{lmodern}
\usepackage[margin=1in]{geometry}
\usepackage{amsmath,amssymb,amsthm,mathtools}
\usepackage{enumitem}
\usepackage{microtype}
\usepackage[colorlinks=true,linkcolor=blue,citecolor=blue,urlcolor=blue]{hyperref}
\usepackage{aliascnt}

\theoremstyle{plain}
\newtheorem{theorem}{Theorem}[section]
\newaliascnt{lemma}{theorem}
\newtheorem{lemma}[lemma]{Lemma}
\aliascntresetthe{lemma}
\newaliascnt{corollary}{theorem}

\aliascntresetthe{corollary}
\theoremstyle{definition}
\newaliascnt{definition}{theorem}
\newtheorem{definition}[definition]{Definition}
\aliascntresetthe{definition}
\theoremstyle{remark}
\newaliascnt{remark}{theorem}

\aliascntresetthe{remark}
\theoremstyle{plain}
\newaliascnt{conjecture}{theorem}
\newtheorem{conjecture}[conjecture]{Conjecture}
\aliascntresetthe{conjecture}

\usepackage[nameinlink,capitalize]{cleveref}
\crefname{theorem}{Theorem}{Theorems}
\crefname{lemma}{Lemma}{Lemmas}
\crefname{corollary}{Corollary}{Corollaries}
\crefname{definition}{Definition}{Definitions}
\crefname{remark}{Remark}{Remarks}
\crefname{conjecture}{Conjecture}{Conjectures}
\Crefname{theorem}{Theorem}{Theorems}
\Crefname{lemma}{Lemma}{Lemmas}
\Crefname{corollary}{Corollary}{Corollaries}
\Crefname{definition}{Definition}{Definitions}
\Crefname{remark}{Remark}{Remarks}
\Crefname{conjecture}{Conjecture}{Conjectures}

\DeclareMathOperator*{\E}{\mathbb E}
\DeclareMathOperator{\tr}{tr}
\newcommand{\R}{\mathbb R}
\newcommand{\1}{\mathbf 1}
\newcommand{\calE}{\mathcal E}
\newcommand{\calF}{\mathcal F}

\newcommand{\eps}{\varepsilon}
\newcommand{\norm}[1]{\left\lVert #1\right\rVert}
\numberwithin{equation}{section}
\newcommand{\calX}{\mathcal{X}}
\DeclareMathOperator{\codim}{codim}
\def\cE{{\cal E}}
\DeclareMathOperator{\dist}{dist}
\DeclareMathOperator{\Var}{Var}
\def\bone{\bf 1}
\title{High-Dimensional Expanders, the Sparsest Cut Problem, \\and Steurer's Conjecture}
\author{
\begin{tabular}[t]{c}
\Large Farzam Ebrahimnejad\\[3pt]
\small The Voleon Group\\[2pt]
\small {\texttt{f.ebrahimn@gmail.com}}
\end{tabular}
\and
\begin{tabular}[t]{c}
\Large Shayan Oveis Gharan\\[3pt]
\small University of Washington\\[2pt]
\small {\texttt{shayan@cs.washington.edu}}
\end{tabular}
}
\date{}

\begin{document}
\maketitle

\begin{abstract}
In 2010, Steurer conjectured that any family of $n$ unit-norm vectors $v_1,\dots,v_n$ with polynomially small average correlation $\E_{i,j}|\langle v_i,v_j\rangle|\leq n^{-\eps}$ contains linear-sized constant-separated sets. We refute this conjecture in a strong sense using the machinery of sparse high-dimensional expanders:  such vector families do not even have linear-sized $\frac{1}{\log^{1/4-o(1)}(n)}$-separated sets. Consequently, we show that there are families of vertex expanders on $n$ vertices for which the (average) $L_2$-mixing time to the uniform distribution of any reweighted simple random walk is at least $\log^{5/4-o(1)} n$.

\end{abstract}

\section{Introduction}
The landmark result of Arora-Rao-Vazirani \cite{ARV} which gave an $O(\sqrt{\log n})$-approximation for the sparsest cut problem motivated the use of the sum-of-squares hierarchy in approximation algorithms. At the heart of their proof they showed the following structural theorem:
\begin{theorem}[Arora-Rao-Vazirani \cite{ARV}]\label{thm:arv}
Given a set of unit-norm vectors $v_1,\dots,v_n$ that satisfy the squared triangle inequality, i.e., for any $1\leq i,j,k\leq n$, 
\begin{equation}\label{eq:triangleinequality}\norm{v_i-v_j}^2 \leq \norm{v_i-v_k}^2+\norm{v_k-v_j}^2.
\end{equation}
If, in addition, $\E_{i,j}\langle v_i,v_j\rangle^2 < o(1)$ (where the expectation is over the uniform distribution), then there are linear-sized sets $A,B\geq\Omega(n)$, such that for any $v_i\in A,v_j\in B$, $\norm{v_i-v_j}^2\geq \Omega(1/\sqrt{\log n}).$
\end{theorem}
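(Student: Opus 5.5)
The plan follows the original Arora--Rao--Vazirani argument, in its Lee-refined form that handles the case where the vectors are merely "spread", rather than the Naor--Lee--Rao reduction.

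\emph{Step 1: from the correlation condition to spreading.} Since $\E_{i,j}\langle v_i,v_j\rangle^2=o(1)$, Cauchy--Schwarz gives $\E_{i,j}|\langle v_i,v_j\rangle|=o(1)$. Hence for most pairs $\norm{v_i-v_j}^2=2-2\langle v_i,v_j\rangle\geq 1$. In particular the vectors are $\Omega(1)$-spread: $\E_{i,j}\norm{v_i-v_j}^2\geq\Omega(1)$, and no ball of radius $1/2$ (in squared distance) contains more than a constant fraction of the points.

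\emph{Step 2: project and sweep.} Pick a random Gaussian direction $g$ and set $x_i=\langle v_i,g\rangle\sqrt{d}$, suitably normalized. Let $A'$ be the points with $x_i\leq -\sigma$ and $B'$ those with $x_i\geq\sigma$, where $\sigma$ is a small constant. By spreading, with constant probability both sets have linear size. Then greedily delete matched pairs $(i,j)\in A'\times B'$ with $\norm{v_i-v_j}^2<\Delta:=c/\sqrt{\log n}$. The goal is to show that with constant probability only a small constant fraction of points is removed; the survivors then give $A$ and $B$.

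\emph{Step 3: the chaining argument (main obstacle).} Suppose instead that with probability $\delta$ the matching has linear size. I would follow ARV and Lee's measure-concentration argument.
\begin{itemize}
\item A large matching means many points $i$ have, for many directions $g$, a partner $j$ with $\langle v_j-v_i,g\rangle\geq 2\sigma$ and $\norm{v_i-v_j}^2\leq\Delta$.
\item Using Gaussian concentration on the sets of "good" directions, together with the squared triangle inequality (so that $\norm{v_{i_0}-v_{i_k}}^2\leq k\Delta$ along a chain of $k$ steps), build chains of length $k$. Their projections grow to about $k\sigma$, while the endpoint distance is only $\sqrt{k\Delta}$.
\item A Gaussian tail bound forces the probability of this event to be at most $\exp(-k^2\sigma^2/(k\Delta))=\exp(-k\sigma^2/\Delta)$. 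Since the number of points is only $n$, this must be at least about $1/n$.
\end{itemize}
The subtle part is keeping a constant fraction of the measure at each chaining step. Lee's induction handles this with the sets that lose only a little measure per step, via the isoperimetric inequality. Naively this gives $k\approx\sqrt{\log n}$ steps and $\Delta=\Omega(1/\sqrt{\log n})$, after tracking the loss. Setting $k\approx\sqrt{\log n}$ makes $\exp(-k\sigma^2/\Delta)<1/n$ once $\Delta\le c/\sqrt{\log n}$ with $c$ small enough. This contradiction shows the matching is small with constant probability.

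\emph{Step 4: conclude.} With constant probability the surviving $A,B$ are linear-sized and every cross pair has squared distance at least $\Omega(1/\sqrt{\log n})$, which is the claim.
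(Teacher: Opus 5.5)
The paper gives no proof of this statement. It quotes it from \cite{ARV} as background and uses it only as a black box, so there is no internal argument to compare with. At that level your Step 1 is what is needed: $\E_{i,j}\langle v_i,v_j\rangle^2=o(1)$ gives $\E_{i,j}\norm{v_i-v_j}^2\ge 2-o(1)$. So the $v_i$ form an $\Omega(1)$-spread unit $\ell_2^2$ configuration, which is the hypothesis of the ARV structure theorem, and citing that theorem finishes.

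If Steps 2--4 are meant as an actual proof, Step 3 has a genuine gap, located exactly where all the difficulty of ARV lies. Your remark that the chaining ``naively'' reaches $k\approx\sqrt{\log n}$ is not correct. Suppose the measure of good directions drops by a constant factor at each extension, which is what you get by just intersecting a concentration neighborhood with the matching event. Then the concentration radius at step $t$ is $\epsilon_t\approx\sqrt t$. Replacing $g$ by a nearby $g'$ can therefore destroy up to $\epsilon_t\norm{v_{i_t}-v_{i_0}}\le \sqrt t\cdot\sqrt{t\Delta}$ of the accumulated projection. Keeping the total loss below $k\sigma/2$ forces $k\lesssim\sigma/\sqrt\Delta$. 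Combined with the union-bound requirement $k\sigma^2/\Delta\gtrsim\log n$, this yields only $\Delta=\Omega(\log^{-2/3}n)$, which is ARV's weaker bound.

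To get $\Delta=\Omega(1/\sqrt{\log n})$ you must keep the measure of good directions bounded below by a constant for about $\sqrt{\log n}$ steps. Then $\epsilon_t=O(1)$, the total loss is $\lesssim k^{3/2}\sqrt\Delta$, and $k\approx\sigma^2/\Delta$ steps are affordable. This is the delicate part of the full ARV argument (and of the refinement you attribute to Lee), and your proposal asserts it rather than proving it.

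Relatedly, your first bullet in Step 3 must be turned into a \emph{uniform} matching cover: a linear-size set of points, each matched with the right orientation for a constant measure of directions $g$. Getting this requires pruning points, and the chains must then stay inside the pruned set. That bookkeeping is also missing.
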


Since then, a natural question puzzled researchers for decades: Can we get improved approximation algorithms for the sparsest cut problem, or unique games problems (or constraint satisfaction problems) shall we use more rounds of the  sum-of-squares relaxation?
Along this route, Barak-Raghavendra-Steurer \cite{BRS} proposed the global correlation rounding technique. Given a solution to the $k$-th round of the sum-of-squares hierarchy, if the ``average correlation'' between the given vectors is $\geq 1/k$, one can round the solution by conditioning on say $k$ random vertices. So, in order to get a sub-exponential time algorithm it is enough to find a rounding algorithm when the average correlation is say $<1/\sqrt{n}$. Motivated by this, and with the hope of extending the above theorem Steurer proposed the following conjecture in his thesis.

\begin{conjecture}[Steurer \cite{SteurerThesis}]\label{conj:steurer}
For every $\eps>0$ there are constants $\delta=\delta(\eps)>0$ and $\eta=\eta(\eps)>0$ such that the following holds for all sufficiently large $n$.  If $v_1,\ldots,v_n$ are unit vectors and
\[
        \E_{i,j} |\langle v_i,v_j\rangle|\le n^{-\eps},
\]
then there are linear-sized constant-separated sets, i.e., there are disjoint sets of vectors $A,B$ with $|A|,|B|\ge\delta n$ such that
\[
        \norm{v_i-v_j}_2^2\ge \eta
        \qquad\quad\forall v_i\in A,\ v_j\in B .
\]
\end{conjecture}

Steurer showed that the above conjecture, if true, implies a constant-factor $n^{n^\eps}$-time algorithm for the sparsest cut problem.
In addition, even if the conclusion holds with a better than $1/\sqrt{\log n}$-separation still it would lead to better approximation (in sub-exponential time).  

The above conjecture received significant attention in the community\footnote{See, for example, \url{https://tcsmath.wordpress.com/2013/06/26/separated-sets-in-unions-of-frames/}} as it seems the most natural approach to round solutions of the SOS hierarchy. 

A natural route to approach the above conjecture is to connect to the machinery of random walks and mixing time. 
\begin{definition}[Vertex Expansion]
    Given a graph $G=(V,E)$, the vertex expansion of $G$ is defined as 
    $$ h(G)=\min_{\emptyset \subsetneq S\subsetneq V, |S|\leq |V|/2} \frac{|N_G(S)|}{|S|},$$
    where $N_G(S)$ is the set of vertices $x\notin S$ that have at least one neighbor in $S$.
\end{definition}
\begin{conjecture}\label{conj:vertexexpansion}
    There are absolute constants $0<h<1, \beta>0$ such that for any graph $G=(V,E)$ with $n=|V|$ vertices and $h(G)>h$, there exists a symmetric stochastic matrix $P\in \R^{V\times V}$ supported on the edges of $G$ such that
    $$ \tr(P^{\beta \log n}) \leq O(1).$$
\end{conjecture}
It is not hard to see that the above two conjectures are equivalent. In the last section we explain a reduction from the first to the second.

\subsection{Main results}

In our main result we refute \cref{conj:steurer} in a strong sense.
\begin{theorem}\label{thm:main}
For every $0<\beta<1/2$, there are infinitely many $n>0$ and nonnegative unit vectors $v_1,\ldots,v_n$ satisfying squared triangle inequality \eqref{eq:triangleinequality} such that
\[
        \E_{i,j} |\langle v_i,v_j\rangle|\le n^{-1+o(1)},
\]
and every two disjoint sets of vectors $A,B$ with $|A|,|B|\ge\beta n$ contain $v_i\in A$ and $v_j\in B$ with
\[
        \norm{v_i-v_j}_2^2
        \le
        O_{\beta}\bigl((\log n)^{-1/4+o(1)}\bigr).
\]
\end{theorem}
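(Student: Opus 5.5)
We build the vectors from a random walk on the vertex set of a sparse bounded-degree high-dimensional expander. A natural choice is a Ramanujan complex / coset complex of dimension $d$ (e.g.\ Kaufman--Oppenheim or LSV complexes). These have $n$ vertices, bounded degree, and local spectral expansion. We also need high-dimensional structure that forces every pair of linear-sized sets to be close in a suitable metric.

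\textbf{The vectors.} Take $v_x=\sqrt{P^t(x,\cdot)}$, the square root of the $t$-step walk distribution. This choice makes the $v_x$ nonnegative unit vectors, and then $\langle v_x,v_y\rangle=\sum_z \sqrt{P^t(x,z)P^t(y,z)}$ is the Bhattacharyya coefficient, so $\norm{v_x-v_y}^2=2-2\langle v_x,v_y\rangle$ is twice the squared Hellinger distance.

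\textbf{Plan.}

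\emph{(i) Triangle inequality.} Squared Hellinger distance alone does not satisfy \eqref{eq:triangleinequality}. I would instead use distributions that are uniform over sets, namely $v_x=\1_{S_x}/\sqrt{|S_x|}$ with all $|S_x|$ equal. Then $\norm{v_x-v_y}^2$ is proportional to $|S_x\triangle S_y|$, which is an $\ell_1$ metric and hence of negative type. So let $S_x$ be a suitable ball or link-based neighborhood of radius $t$ in the complex; the $L_1$-embeddability handles the triangle inequality for free.

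\emph{(ii) Small correlation.} We need $\E_{x,y}|S_x\cap S_y|/|S_x|\le n^{-1+o(1)}$. This holds whenever $|S_x|=n^{o(1)}$, e.g.\ $t=o(\log n)$ in a bounded-degree complex, since the average overlap is about $|S|/n$.

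\emph{(iii) Non-separation, the main obstacle.} Given $A,B$ of size $\beta n$, we must find $x\in A$, $y\in B$ with $|S_x\triangle S_y|\le \epsilon |S_x|$, where $\epsilon=(\log n)^{-1/4+o(1)}$. That is, some neighboring sets must overlap almost entirely.

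In an ordinary expander, balls of adjacent vertices overlap only in a constant fraction, so we must exploit high-dimensional structure. Let $S_x$ be the set of top faces (or vertices) reachable via a down-up walk of length $t$ started from $x$. In a $d$-dimensional complex, two adjacent vertices share a link structure, so the down-up walk from neighbors differs by about $1/d$ per step. Moving along a path of length $L$, the symmetric difference grows by at most $L/d$ in the worst case. Because the 1-skeleton is an expander, any two linear sets are at graph distance $O(1)$ after using expansion; more precisely, some edge between $A$'s and $B$'s neighborhoods exists within $O(\log(1/\beta))$ steps.

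So the separation is at most $O(1/d)$. The constraint is that $|S_x|=n^{o(1)}$ while $S_x$ must still smooth enough to have $1/d$ sensitivity. Balancing the dimension $d$, the walk length $t$, and the degree (a bounded-degree $d$-dimensional complex has degree $\exp(\mathrm{poly}(d))$) against $|S_x|\le n^{o(1)}$ should yield $d\approx(\log n)^{1/4}$. The exponent $1/4$ presumably arises from the degree $2^{O(d^2)}$ raised to walk length $\sim d^2$: $2^{d^4}\le n^{o(1)}$ gives $d=(\log n)^{1/4-o(1)}$.

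The hard step is proving the stability estimate $\norm{v_x-v_y}^2\lesssim 1/d$ for adjacent $x,y$, uniformly, via local spectral expansion (Garland-type arguments for the down-up walk). Combined with the 1-skeleton's vertex expansion, this guarantees that any $A,B$ of size $\beta n$ contain a close pair, with the constant depending on $\beta$.
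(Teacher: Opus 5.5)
Your steps (i) and (ii) are fine in spirit: equal-size $0/1$ vectors give a Hamming metric and hence \eqref{eq:triangleinequality}, and small, regularly overlapping sets give small correlation. Step (iii), however, has a genuine gap. The stability estimate you flag as the hard step is false on the graph where you use it. This holds for \emph{any} unit vectors with small correlation, not just for balls.

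You need every edge $xy$ of the 1-skeleton to satisfy $\norm{v_x-v_y}^2\lesssim 1/d$, while also using that the 1-skeleton is a good expander. Let $P$ be a symmetric stochastic matrix supported on the index graph plus self-loops, with all nontrivial eigenvalues at most $1-\gamma$. Apply it to the coordinate functions $f_z(x)=v_x(z)$:
\[
\sum_{x,y}P(x,y)\langle v_x,v_y\rangle=\sum_z\langle f_z,Pf_z\rangle\le n-\gamma\Bigl(n-\frac1n\norm{\sum_x v_x}^2\Bigr)=n-\gamma n(1-c),
\]
where $c=\E_{x,y}\langle v_x,v_y\rangle\le n^{-1+o(1)}$. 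If every edge had squared length at most $\eps$, the left side would be at least $n(1-\eps/2)$, forcing $\eps\ge 2\gamma(1-c)$.

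For the 1-skeleton of a (say regular) $\frac1{2d}$-top-link expander, trickle-down gives $\gamma\ge 1-O(1/d)$. So adjacent vectors are, on average, nearly orthogonal rather than $O(1/d)$-close. This is just the $t=1$ case of the trace obstruction in \cref{lem:trace}. Your own remark about balls in ordinary expanders is an instance of it, and no high-dimensional structure can get around it while the index graph has constant spectral gap. Separately, your derivation of the exponent ($2^{d^4}\le n^{o(1)}$) is reverse-engineered from the target rather than derived.

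The missing idea is to put the high-dimensional structure into the index set itself. The paper indexes the vectors by facets, $v_\sigma=d^{-1/2}\1_\sigma$. Facets adjacent in the down-up graph then have squared distance exactly $2/d$, with no walk or smoothing needed. The correlation bound $\Delta/n=n^{-1+o(1)}$ follows directly from sparsity.

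Consistent with the obstruction above, the down-up graph has spectral gap only $\Theta(1/d)$, so $\beta n$-sized sets are \emph{not} $O(1)$ apart. Instead, a Poincar\'e argument bounds their distance. Suppose every $\sigma\in A$, $\tau\in B$ had $|\sigma\cap\tau|<(1-\eps)d$; then they are at down-up distance more than $\eps d$. The $\frac1{\eps d}$-Lipschitz function $f=\max\{1+\delta-\dist(\cdot,A)/(\eps d),0\}$ then has Dirichlet form at most $(\eps d)^{-2}$ and variance at least $\beta^2$. This contradicts the gap $\alpha/d$ once $\eps=1/(\beta\sqrt{\alpha d})$.

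The result is separation $O(1/(\beta\sqrt d))$, not $O(1/d)$. The exponent $1/4$ comes from this $1/\sqrt d$ together with $\log n=d^{2+o(1)}$ in the Hopkins--Ray complex, where $r\approx d$ and $q\approx r^2$.
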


We believe that the $\log^{-1/4}n$ separation above is not tight; perhaps, using a more optimized construction of sparse high dimensional expanders this can be improved to the tight bound of $\log^{-1/2}n$.
As an immediate consequence, our counterexample also refutes \cref{conj:vertexexpansion}.
\begin{theorem}\label{thm:vertex-mixing}
For every $0<h_*<1$, there is an infinite family of graphs
$H_m=(V_m,E_m)$, with $n_m=|V_m|\to\infty$, such that:
\begin{enumerate}[label=(\roman*),leftmargin=2.5em]
    \item $h(H_m)\ge h_*$ for every $m$;
    \item For every symmetric stochastic matrix $P$ supported on edges
    of $H_m$ with $n:=n_m$ vertices and $t \ge 1$, one has
    \[
     \tr(P^{4t})\geq n^{1-o(1)}e^{-t\log^{-1/4+o(1)}n}.
    \]
\end{enumerate}
\end{theorem}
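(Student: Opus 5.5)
We derive \cref{thm:vertex-mixing} from \cref{thm:main} via the standard correspondence between separated sets and vertex expansion, together with the spectral reading of $\tr(P^{4t})$. Let $v_1,\dots,v_n$ be the nonnegative unit vectors of \cref{thm:main}, with $\beta$ chosen as a function of $h_*$. Set $\eta:=C_\beta(\log n)^{-1/4+o(1)}$ to be the separation bound from that theorem, and let $H$ be the threshold graph on $[n]$ with $i\sim j$ iff $\norm{v_i-v_j}^2\le \eta$.

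\emph{Step 1: vertex expansion.} Take $S$ with $|S|\le n/2$ and put $B=V\setminus(S\cup N_H(S))$. By construction no edge of $H$ joins $S$ to $B$, so every pair in $S\times B$ has squared distance $>\eta$. If both $|S|\ge\beta n$ and $|B|\ge \beta n$, this contradicts \cref{thm:main}. Hence when $|S|\ge \beta n$ we get $|B|<\beta n$, and so $|N(S)|\ge n/2-\beta n\ge h_*|S|$ once $\beta$ is small relative to $1-h_*$. For small sets with $|S|<\beta n$ this argument gives nothing, so we must modify $H$. I would add the edges of a constant-degree expander (or otherwise sparse expander) with vertex expansion at least $h_*$ on small sets. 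These extra edges break the geometric meaning of adjacency, and the extra edges carry no distance guarantee, so Step 2 must account for them. This is the step I expect to be the main obstacle.

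My approach is to give the added edges negligible influence on the trace bound. A constant-degree graph has at most $O(n)$ edges, but a walk could still exploit them heavily. A cleaner option: note that $h(H)$ only needs expansion, and if $N_H(S)$ is small for a small $S$, then $S$ has a dense geometric neighborhood. I would first try to show directly from the construction (a high-dimensional expander, where the vectors are near-orthogonal indicator-like embeddings and small sets do expand) that small sets expand in the threshold graph. This is plausibly how the authors proceed, using properties of the sparse HDX beyond what \cref{thm:main} states.

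\emph{Step 2: trace lower bound.} Let $P$ be symmetric stochastic and supported on $E(H)$. Form the Gram matrix $M=(\langle v_i,v_j\rangle)$, which is PSD with $\tr M=n$. On every edge we have $\langle v_i,v_j\rangle=1-\norm{v_i-v_j}^2/2\ge 1-\eta/2$, so
\[
\sum_{ij}P_{ij}\langle v_i,v_j\rangle\ge n(1-\eta/2).
\]
Decompose along the eigenvectors of $P$, with eigenvalues $\lambda_k$. The quantity $\sum_k\lambda_k w_k$ with weights $w_k=\sum_r\langle u_k,V_{\cdot r}\rangle^2$, summing to $n$, is at least $n(1-\eta/2)$. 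Moreover $w_k\le \norm{M}$ on each direction, and
\[
\norm{M}\le \max_i\sum_j|\langle v_i,v_j\rangle|.
\]
We arrange this to be $n^{o(1)}$, using the small average correlation after pruning rows or using regularity of the construction. Then the weight on eigenvalues with $\lambda\ge 1-\eta$ is at least $n/2$, and since each eigenvector carries weight at most $n^{o(1)}$, at least $n^{1-o(1)}$ eigenvalues satisfy $\lambda\ge1-\eta$. Therefore
\[
\tr(P^{4t})\ge n^{1-o(1)}(1-\eta)^{4t}\ge n^{1-o(1)}e^{-O(t\eta)},
\]
which is the claim. If expander edges had to be added in Step 1, they contribute to $\sum_{ij}P_{ij}\langle v_i,v_j\rangle$ only a loss proportional to their $P$-mass. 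Controlling that loss against the adversarial choice of $P$ is the hard point, so I would prefer the intrinsic expansion route.
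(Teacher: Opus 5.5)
Your Step 1 has a genuine gap: you never show that sets with $|S|<\beta n$ expand, and neither remedy you float is carried out. Adding expander edges destroys the only property Step 2 uses, namely that every edge in the support of $P$ has $\langle v_i,v_j\rangle\ge 1-\eta/2$; an adversarial $P$ can put all its mass on the added edges. The \emph{intrinsic} route needs a small-set vertex-expansion property of the threshold graph of the HDX, which you do not prove. The paper uses nothing beyond the statement of \cref{thm:main}.

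The missing idea is to modify $H$ by deleting vertices rather than adding edges, i.e.\ to pass to an induced subgraph by peeling. Start with $W=V$. While some $U\subseteq W$ with $|U|\le\beta n$ has $|N(U)\cap W|<h_*|U|$, delete $U\cup(N(U)\cap W)$ from $W$.
\begin{itemize}
\item Every deleted $U$ has no edges into the final $W$, because its neighbours inside the current $W$ were deleted with it.
\item Suppose $|W|$ fell below $n/2$. Each round removes at most $(1+h_*)\beta n$ vertices, so for $\beta$ small the final $W$ still has at least $n/4$ vertices.
\item At the same time, the deleted $U$'s have total size more than $\frac{n}{2(1+h_*)}>\frac n4$.
\item These two sets have no edges between them, contradicting \cref{thm:main}.
\end{itemize}
So the process stops with $|W|\ge n/2$. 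Sets of $H[W]$ of size at most $\beta n$ expand by the stopping rule. Sets with $\beta n<|U|\le |W|/2$ expand by exactly your large-set argument run inside $W$, with $\beta\le(1-h_*)/4$. Since $H[W]$ is induced, every edge still joins vectors at squared distance at most $\eta$. Hence your trace bound applies verbatim on $W$: the principal submatrix $M_W$ is PSD with unit diagonal and $\norm{M_W}\le\norm{M}$.

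Your Step 2 is a correct alternative to the paper's argument. The paper applies Jensen to the probability weights $\frac1n u_k^{\mathsf T}Mu_k$ and then Cauchy--Schwarz, $\tr(MP^{2t})\le\sqrt{\tr(M^2)\tr(P^{4t})}$. It therefore only needs $\norm{M}_F^2\le\sum_{i,j}|\langle v_i,v_j\rangle|\le n^{1+o(1)}$, which is exactly the average-correlation guarantee. Your eigenvalue count needs the stronger worst-case bound $\norm{M}\le n^{o(1)}$. That bound does hold here: in the construction of \cref{thm:abstract-top-face} the $i$-th row sum is $\frac1d\sum_{x\in\sigma_i}|\calF_x|\le\Delta=n^{o(1)}$. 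If you instead prune heavy rows via Markov, do it before the peeling, applying \cref{thm:main} with density $\beta/2$. Removing vertices after the peeling could destroy the expansion.
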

The above theorem implies that the $L_2$-mixing time of the random walk $P$ which converges to the uniform stationary distribution on $H_m$ is at least $\log^{5/4-o(1)} n$; in other words, although a simple random walk always mixes in $O(\log n)$ on regular {\bf edge} expander, there are families of vertex expanders in which no reweighted (simple) random walk mixes faster than $\log^{5/4-o(1)}n$ steps to the uniform stationary distribution. 

This theorem should be contrasted with recent results on the reweighting eigenvalue problem \cite{OTZ,KLT}. The authors show that for any vertex expander $G$, there is a symmetric stochastic matrix $P$ such that $\lambda_2(P)\leq 1-\frac{h^2(G)}{\log \Delta}$, where $\Delta$ is the maximum degree of G. Such a result implies an upper-bound of $O(\log n\log\Delta)$ on the mixing time of $P$ in the worst case. To put it in context, our result shows that these bounds cannot be improved to $O(\log n)$ and some dependencies on the max degree is necessary.

\subsection{Acknowledgements}
The authors used GPT-5.5 Pro during the development of this work, primarily through an agentic system for long-horizon mathematical research developed by the first author and built on top of Codex and GPT models. The system was supplied with the authors' evolving research notes and was used, under the authors' guidance, to explore proof strategies. In particular, the connection between \cref{conj:steurer} and the high-dimensional expander literature emerged during these interactions.

The second author would like to thank James R Lee and David Steurer for insightful discussions during the long term course of this project. The second author's research is funded by an NSF grant CCF-2203541, a Simons Investigator Award 928589, and
a Lazowska Endowed Professorship in Computer Science \& Engineering.

\section{Preliminaries}
\subsection{Random Walks}
Given a (weighted) graph $G=(V,E,w)$ with (weighted) adjacency matrix $A$; let $D$ be the diagonal matrix of vertex (weighted) degrees and let $P=D^{-1}A$ be the transition probability matrix of the simple random walk on $G$ and $\mu$ the stationary distribution of the walk, i.e., we have $\mu^T P =\mu$ and $P\bone=\bone$.
 For two functions $f,g:V\to\R$, let
  $$\langle f,g\rangle_\mu= \E_{x\sim\mu} f(x)g(x).$$ 
 It turns out that $P$ is self-adjoint  w.r.t. this inner product. So, by spectral theorem it has real eigenvalues with an orthonormal set of eigenfunctions. The largest eigenvalue is 1 corresponding to the all-ones eigenfunction.
 
 Given $f,g:V\to\R$ the Dirichlet form is defined as follows:
 $$ \cE_P(f,g) = \frac12\E_{x\sim\mu} \sum_y P(x,y) (f(x)-f(y))(g(x)-g(y))$$
 The variance of $f$ is defined as follows:
 $$ \Var_\mu(f)=\frac{1}{2} \E_{x,y\sim\mu} (f(x)-f(y))^2$$
 \begin{lemma}[Rayleigh Quotient]\label{lem:rayleigh}
     The second eigenvalue of $P$ is
     $$ \lambda_2(P)=1-\min_{f:\langle f,\bone\rangle_\mu=0} \frac{{\cal E}_P(f,f)}{\langle f,f\rangle_\mu} = 1-\min_{f\neq \bone} \frac{\cE_P(f,f)}{\Var_\mu(f)}$$ 
 \end{lemma}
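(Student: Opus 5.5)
The plan is to reduce the statement to the spectral theorem for a self-adjoint operator on the finite-dimensional space $\R^V$ with inner product $\langle\cdot,\cdot\rangle_\mu$, and then verify two identities: $\cE_P(f,f)=\langle f,(I-P)f\rangle_\mu$, and $\Var_\mu(f)=\langle f,f\rangle_\mu$ when $\langle f,\bone\rangle_\mu=0$. Throughout I assume $G$ is connected, or at least that $\mu$ has full support, so that $\langle\cdot,\cdot\rangle_\mu$ is a genuine inner product. I also read the second minimum as ranging over non-constant $f$, since the Dirichlet form and the variance both vanish on constants.

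\textbf{Step 1: the Dirichlet form.} Expand the product to get
\[
\cE_P(f,f)=\tfrac12\E_{x\sim\mu}\sum_y P(x,y)\bigl(f(x)^2-2f(x)f(y)+f(y)^2\bigr).
\]
Use $\sum_y P(x,y)=1$ on the first term. For the last term, stationarity $\sum_x\mu(x)P(x,y)=\mu(y)$ gives $\E_\mu f^2$. The cross term is $\langle f,Pf\rangle_\mu$. Together these give $\cE_P(f,f)=\langle f,f\rangle_\mu-\langle f,Pf\rangle_\mu=\langle f,(I-P)f\rangle_\mu$.

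\textbf{Step 2: the variance.} Expanding the definition,
\[
\Var_\mu(f)=\E_\mu f^2-(\E_\mu f)^2=\langle f,f\rangle_\mu-\langle f,\bone\rangle_\mu^2 .
\]
So $\Var_\mu(f)=\langle f,f\rangle_\mu$ whenever $f\perp_\mu\bone$.

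\textbf{Step 3: the first equality (Courant--Fischer).} Since $P$ is self-adjoint with respect to $\langle\cdot,\cdot\rangle_\mu$, which uses reversibility $\mu(x)P(x,y)=\mu(y)P(y,x)$ as holds for $P=D^{-1}A$, it has a $\mu$-orthonormal eigenbasis $\bone=f_1,f_2,\dots,f_n$ with eigenvalues $1=\lambda_1\ge\lambda_2\ge\dots$. Write $f=\sum_{i\ge2}c_if_i$ with $f\perp\bone$. Then
\[
\langle f,(I-P)f\rangle_\mu=\sum_{i\ge 2}(1-\lambda_i)c_i^2\ge(1-\lambda_2)\sum_{i\ge 2} c_i^2 = (1-\lambda_2)\langle f,f\rangle_\mu,
\]
with equality at $f=f_2$. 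This proves the first formula.

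\textbf{Step 4: the second equality.} Both $\cE_P(f,f)$ and $\Var_\mu(f)$ are invariant under $f\mapsto f+c\bone$. Any non-constant $f$ can therefore be replaced by $f-\langle f,\bone\rangle_\mu\bone$, which is orthogonal to $\bone$, without changing the ratio. By Step 2 the ratio then equals the Rayleigh quotient from Step 3, so the two minima coincide.

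The only delicate point is bookkeeping. The stationarity step in Step 1 and the self-adjointness used in Step 3 both rely on $P$ being reversible with respect to $\mu$. For the symmetric stochastic matrices used later, $\mu$ is uniform and both properties are immediate.
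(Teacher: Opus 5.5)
Your proof is correct. The paper states this lemma without proof as a standard fact, resting on what it recalls just before it: $P$ is self-adjoint in $\langle\cdot,\cdot\rangle_\mu$, so it has a $\mu$-orthonormal eigenbasis containing $\1$. Your derivation via $\cE_P(f,f)=\langle f,(I-P)f\rangle_\mu$, $\Var_\mu(f)=\langle f,f\rangle_\mu-\langle f,\1\rangle_\mu^2$ and Courant--Fischer on $\1^{\perp}$ is precisely that standard argument, including the right reading of the minima (nonzero $f\perp_\mu\1$, and non-constant $f$).
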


\subsection{Local Spectral Expanders}
A simplicial complex $\calX$ on a finite ground set $V$
is a downward-closed collection of subsets of $V$, i.e., if $\tau\in \calX$ and $\sigma \subseteq \tau$, then $\sigma \in \calX$. 
The elements of $\calX$ are called {\bf faces}, and the maximal faces are called {\bf facets}. 
We say that a face $\tau$ is of dimension $k$ if $|\tau| = k$ and write $\text{dim}(\tau) = k$\footnote{Note that this differs from the typical topological definition of dimension for faces of a simplicial complex.}. 
A simplicial complex $\calX$ is a {\bf pure} $d$-dimensional complex if every facet has dimension $d$. 

Given a  $d$-dimensional complex $\calX$,
for any   $0 \leq i \leq d$,    define  $\calX(i) = \{\tau \in \calX \mid \text{dim}(\tau) = i\}$, e.g., $\calX(1)=V$ is called the set of vertices of $\calX$.  Moreover, the codimension of a face $\tau \in \calX$ is defined as $\codim (\tau)  = d - \dim (\tau)$. For a face $\tau \in \calX$, define the  {\bf link} of $\tau$ as the simplicial complex $\calX_{\tau} = \{\sigma \setminus \tau \mid \sigma \in \calX, \sigma \supseteq \tau\}$.

Given $\calX$, the degree of $\calX$ is defined as 
$$ \max_{x\in V} |\{\sigma\in X(d): x\in \sigma\}|.$$
 We say $\calX$ is a {\bf connected} complex if for every $\tau$ of codimension at least 2, the 1-skeleton of $\calX_\tau$ is a connected graph.

\begin{definition}[(Top-link) Spectral Expanders]
Consider a link $\tau$ of co-dimension 2. The 1-skeleton of $X_\tau$ is a simple graph with vertex set $\calX_\tau(1)$ and edge set $\calX_\tau(2)$.
    We define $\lambda_2(\tau)$ as the 2nd-eigenvalue of the simple random walk on the 1-skeleton of $\calX_\tau$.
    We say $\calX$ is an $\alpha$-top-link expander if $\lambda_2(\tau)\leq\alpha$ for any $\tau$ of co-dimension 2.
\end{definition}

\paragraph{The Down-up walk.} Given a $d$-dimensional complex $\calX$ and a facet $\tau\in \calX(d)$, one can run a Markov chain called  the down-up walk on $\calX(d)$: suppose we are currently at $\tau$; we choose a vertex $v \in \tau$  uniformly at random, and among all facets $\sigma\supseteq \tau \setminus \{v\}$ we move to one uniformly at random. 
 Note that this walk is lazy as we can go back to $\sigma$; furthermore, if we jump to $\tau$, we always have 
\begin{equation*}\label{eq:symmdiff}|\tau\Delta\sigma|:=|\tau\smallsetminus \sigma|+|\sigma\smallsetminus\tau| \in \{0, 2\}.
\end{equation*}
It turns out that if $\calX$ is connected, this Markov chain converges to a uniform distribution over all facets of $X$, call this distribution $\mu$. We write $P^\vee_d$ to denote the transition probability matrix of the down-up walk (with stationary distribution $\mu$).

Over the last few years (top-link) high-dimensional expanders have been used extensively in the analysis of Markov chains.  The following local-to-global theorem is central in many such applications: 
\begin{theorem}[\cite{DinurKaufman, Oppenheim18,AlevLau}]\label{thm:oppenheims}
If $\calX$ is a connected $d$-dimensional $\frac{1-\delta}{d}$-top-link spectral expander then the spectral gap of the down-up walk matrix ($P^\vee_d$) is at least $\frac{(1-\frac{1-\delta}{\delta d})^d}{d}$. In particular, for $\delta=1/2$, we get a lower bound of $\frac{1}{e\cdot d}$.
\end{theorem}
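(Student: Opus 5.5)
The plan is the standard two-stage argument: first \emph{trickle down} the spectral bound from the top links to all links (Oppenheim), then apply the \emph{local-to-global} inequality for the down-up walk (Alev--Lau, refining Dinur--Kaufman). For $-1\le i\le d-2$ let $\gamma_i$ be the maximum of $\lambda_2(\tau)$ over faces $\tau$ with $\dim\tau=i+1$. Here $\lambda_2(\tau)$ is the second eigenvalue of the walk on the 1-skeleton of $\calX_\tau$, weighted by the number of facets containing each edge; the case $i=-1$ is the empty face, i.e.\ the whole complex. The hypothesis gives $\gamma_{d-2}\le\alpha:=\frac{1-\delta}{d}$.

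\textbf{Step 1 (trickling down).} For a face $\tau$ of codimension at least $3$, apply Garland's method. Take a test function $f$ on the 1-skeleton of $\calX_\tau$ that is orthogonal to constants. Write its Dirichlet form and its norm as averages, over vertices $v$ of $\calX_\tau$, of the corresponding quantities for the restriction of $f$ to the link of $\tau\cup\{v\}$. Applying the bound for those links to the centered restrictions, and using that the link of $\tau$ is connected, yields Oppenheim's inequality $\lambda_2(\tau)\le \frac{\lambda}{1-\lambda}$, where $\lambda$ bounds the links one level up. Iterating from the top gives
\[
\gamma_{d-2-j}\le \frac{\alpha}{1-j\alpha}\qquad (0\le j\le d-2).
\]
Hence every $\gamma_i\le \frac{\alpha}{1-d\alpha}=\frac{1-\delta}{\delta d}$.

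\textbf{Step 2 (local-to-global).} Let $U_k,D_{k+1}$ be the up and down operators between $\calX(k)$ and $\calX(k+1)$, so that $P^\vee_d=U_{d-1}D_d$. The key local identity, proved by expanding both sides in terms of link walks, is
\[
D_{k+1}U_k-\tfrac{1}{k+1}I \;\preceq\; \tfrac{k}{k+1}\,\bigl(\gamma_{k-2}\tfrac{1}{k}\ldots\bigr),
\]
and more precisely the Alev--Lau inequality
\[
\lambda_2(D_{k+1}U_k)\le 1-\frac{1}{k+1}\prod_{i=-1}^{k-2}(1-\gamma_i).
\]
It is obtained by induction on $k$: one writes the non-lazy part of $D_{k+1}U_k$ as an average of link walks and peels off one link level at a time. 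Since $U_{d-1}D_d$ and $D_dU_{d-1}$ share their nonzero spectrum, applying this at $k=d-1$ gives
\[
\mathrm{gap}(P^\vee_d)\ \ge\ \frac1d\prod_{i=-1}^{d-3}(1-\gamma_i)\ \ge\ \frac1d\Bigl(1-\frac{1-\delta}{\delta d}\Bigr)^{d-1}\ \ge\ \frac{(1-\frac{1-\delta}{\delta d})^d}{d}.
\]

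\textbf{Step 3 (the case $\delta=1/2$).} Here $\frac{1-\delta}{\delta d}=\frac1d$, so the bound becomes $\frac1d(1-\frac1d)^{d-1}\ge \frac{1}{e d}$, using $(1-\frac1d)^{d-1}\ge e^{-1}$. Connectivity of $\calX$ is exactly what guarantees that all links are connected, which keeps the $\gamma_i<1$ and makes $\mu$ uniform on facets.

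\textbf{Main obstacle.} I expect the inductive operator inequality in Step 2 to be the main obstacle. The difficulty is the bookkeeping that separates the lazy part of $D_{k+1}U_k$ from its non-lazy part. The non-lazy part must be expressed as a convex combination of link random walks, and the error must be shown to compound only multiplicatively through the factors $(1-\gamma_i)$. I would first try to follow Alev--Lau's approach, which controls the eigenvalues of $U_kD_k$ against $D_{k+1}U_k$ level by level.
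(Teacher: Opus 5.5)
Your proposal is correct and is essentially the argument the paper delegates to its citations, since the paper gives no proof of its own: Oppenheim's trickle-down followed by the Alev--Lau product bound. You also rightly keep the exponent $d-1$ to get $\frac1d(1-\frac1d)^{d-1}\ge\frac{1}{ed}$; the exponent-$d$ form in the statement would not give this by itself, since $(1-\frac1d)^d<e^{-1}$.

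A few small fixes are needed.
\begin{itemize}
\item With $|\tau|=i+1$, the codimension-2 hypothesis is $\gamma_{d-3}\le\alpha$, not $\gamma_{d-2}\le\alpha$, because faces of size $d-1$ have edgeless links. So the index range is $-1\le i\le d-3$, which is consistent with the product $\prod_{i=-1}^{d-3}$ you actually use. The slip is harmless because you only use a uniform bound on the $\gamma_i$.
\item The truncated display should read $D_{k+1}U_k-\tfrac{1}{k+1}I\preceq\tfrac{k}{k+1}\bigl(\gamma_{k-2}I+(1-\gamma_{k-2})U_{k-1}D_k\bigr)$. Your product bound follows from this by a one-line induction.
\item Write $x=\frac{1-\delta}{\delta d}$. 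The chain $\prod_i(1-\gamma_i)\ge(1-x)^{d-1}\ge(1-x)^d$ needs $x\le 1$, i.e.\ $\delta\ge\frac{1}{d+1}$. This restriction is implicit in the statement itself.
\end{itemize}
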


\subsection{Sparse High-Dimensional Expanders}
Let $\calX$ be a $d$-dimensional $\alpha$-top link expander for some $\alpha<1/d$ with degree $\Delta$. We say $\calX$ is {\bf sparse} if $\Delta$ is a sub-polynomial of $\calX(1)$, ideally a constant. 
Over the past years numerous constructions of sparse top-link expanders have been given \cite{LSVExplicit, LSV, EvraKaufman, DinurKaufman, KaufmanOppenheim18, ODonnellPratt22, HR26}. They have found numerous applications, e.g., in the design and analysis of locally testable codes \cite{DELLM22} or in construction of near-optimal PCPs \cite{BMVY25}. 
It turns out that most of these constructions work for our purpose of proving \cref{thm:main}. Here, for simplicity of the argument we explain a very recent combinatorial construction by Hopkins and Ray \cite{HR26}. 

A $d$-dimensional complex $X$ is $d$-partite if its vertices $\calX$ can be partitioned into $d$ sets, $P_1,\dots,P_d$ such that every facet has exactly one vertex from each part.

\paragraph{Projected Flags Complex.}
Let $r \in \mathbb{N}$, let $q$ be a prime power, and let
\[
S=\{s_1<s_2<\cdots<s_k\}\subseteq [2r-1].
\]
The projected flags complex $\calX^S$ is the $k$-partite $k$-dimensional
simplicial complex whose vertices are the subspaces
$W\subset \mathbb{F}_q^{2r}$ with $\dim(W)\in S$, partitioned according
to dimension. Its facets are maximal chains
\[
\calX^S(k)
=
\{\, W_1 \subset W_2 \subset \cdots \subset W_k
:\dim(W_i)=s_i \,\}.
\]

We take $S$ to be 
\[  
S_k=\{r-k,\ldots,r-1,r,r+1,\ldots,r+k\}.
\]
 Observe that there are  $q^{r^2\pm\Theta(kr)}$ subspaces of dimension $i$ for any $i\in S$, eliminating the imbalance up to lower order factors. Fixing
$k$ and letting $r\to \infty$, we get a strongly explicit infinite family of complexes with $q^{\Theta(r^2)}$ vertices
and degree $q^{O(kr)} \approx 2^{O(k\sqrt{\log \calX(1)})}$ which is sub-polynomial in the number of vertices.
Furthermore, it is not hard to see $\lambda_2(\tau)\leq \frac{2}{\sqrt{q}-1}$ for any $\tau$ of co-dimension 2. Summarizing and by a standard application of \cref{thm:oppenheims} with $\delta=1/2$ we obtain the following theorem:
\begin{theorem}\label{thm:sparse-hdx}
Given integers $1\le k<r$, let $d=2k+1$.  For every prime power
$q\ge (4d+1)^2$, there is a pure $d$-partite $d$-dimensional simplicial complex
$\calX$ with $q^{\Theta(r^2)}$ vertices, degree at most $q^{O(rd)}$ which is a $1/2d$-top-link-expander and  the  $\lambda_2(P^\vee_d)\leq 1-1/ed$.
\end{theorem}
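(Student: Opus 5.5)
The plan is to instantiate the projected flags complex $\calX^{S_k}$ with $S_k=\{r-k,\dots,r+k\}$, so that $|S_k|=2k+1=d$, and check each claimed property in turn.

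\emph{Partiteness, purity and vertex count.} By definition $\calX^{S_k}$ is $d$-partite and pure $d$-dimensional. Every partial flag with dimensions in a subset of $S_k$ extends to a full chain $W_{r-k}\subset\cdots\subset W_{r+k}$, since $r-k\ge 1$ and $r+k\le 2r-1$ by $k<r$. The vertex count is $\sum_{i\in S_k}\binom{2r}{i}_q$, and each term is $q^{i(2r-i)\pm O(1)}=q^{r^2\pm O(k^2)}$, hence $q^{\Theta(r^2)}$ vertices in total.

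\emph{Degree.} Fix a vertex $W$ of dimension $i$. The facets containing it are products of complete flags, with dimensions in $S_k$, inside $W$ and inside $\mathbb F_q^{2r}/W$. Their number is at most the number of chains of $d-1$ further subspaces of $\mathbb F_q^{2r}$ with dimensions in a window of width $2k$ around $r$. Counting a chain step by step, each successive subspace $W_{j+1}\supset W_j$ with $\dim W_{j+1}-\dim W_j=1$ is chosen in at most $q^{2r}$ ways, and similarly going downward. So the degree is at most $q^{2r\cdot 2k}\le q^{O(rd)}$.

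\emph{Links and top-link expansion.} This is the substantive step. Take a face $\tau$ of codimension $2$, that is, a flag missing exactly two dimensions $a<b$ from $S_k$. The link is a bipartite graph between the subspaces of dimension $a$ and those of dimension $b$ that fit into the flag. There are two cases.

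\begin{enumerate}[label=(\roman*)]
\item If $a$ and $b$ are separated by some dimension present in $\tau$, say $U$ with $a<\dim U<b$, then the choices of $W_a$ and $W_b$ are independent given $U$. The link is then a complete bipartite graph, whose simple random walk has $\lambda_2=0$.
\item If $a$ and $b$ are consecutive in $S_k$, the link is the point-versus-hyperplane type incidence graph of subspaces of dimension $a-\dim W_-$ and $b-\dim W_-$ in $W_+/W_-$. Here $W_\pm$ are the neighbouring flag members, or $0$ and $\mathbb F_q^{2r}$ at the ends. Since $b-a=1$ and $\dim(W_+/W_-)=b-a+2=3$, or larger at the boundary, the link is the points/lines incidence graph of a projective plane $PG(2,q)$ or, at the boundary, a subspace-incidence graph of a projective space. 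The former has nontrivial eigenvalues of absolute value $\sqrt q/(q+1)$; the latter are similar, and in general one has the stated bound $\lambda_2\le 2/(\sqrt q-1)$. I would cite the standard eigenvalue computation for Grassmann incidence graphs rather than redo it; I expect confirming the boundary cases to be the only mildly delicate point.
\end{enumerate}

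Since $q\ge(4d+1)^2$, we get $2/(\sqrt q-1)\le 2/(4d)=1/(2d)$. So $\calX$ is a $\frac1{2d}$-top-link expander, which is $\frac{1-\delta}{d}$ with $\delta=1/2$.

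\emph{Connectivity and spectral gap.} Every link of codimension at least $2$ is again a product of flag complexes of smaller spaces. Its $1$-skeleton is connected, because any two subspaces are joined via flags through intermediate subspaces; for codimension exactly $2$ this follows from $\lambda_2<1$. Applying \cref{thm:oppenheims} with $\delta=1/2$ gives spectral gap at least $1/(ed)$, so $\lambda_2(P^\vee_d)\le 1-1/(ed)$.
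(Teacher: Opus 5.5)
Your proposal is correct and follows the same route as the paper: instantiate the projected flags complex $\calX^{S_k}$, bound every codimension-2 link by $\lambda_2(\tau)\le 2/(\sqrt{q}-1)\le 1/(2d)$ using $q\ge(4d+1)^2$, and apply \cref{thm:oppenheims} with $\delta=1/2$. Your case split of the links fills in what the paper dismisses as ``not hard to see.'' When a flag member separates the two missing dimensions the link is complete bipartite. Otherwise it is the point--line incidence graph of a projective space or its dual; with $L$ lines through each point, its incidence matrix satisfies $NN^{\mathsf T}=(L-1)I+J$, so the nontrivial walk eigenvalues have modulus below $1/\sqrt{q+1}$. This settles the boundary case you flagged.
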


\section{Local Spectral Expanders and the Main Technical Theorem}

In this section we prove our main technical theorem.

\begin{theorem}\label{thm:abstract-top-face}
Let $\calX$ be a pure $d$-dimensional complex with degree $\Delta$, and suppose the spectral gap of the $d$-dimensional down-up walk is at least $\frac{\alpha}{d}$.  Then there are 
$n:=|\calX(d)|$ nonnegative unit vectors
$v_1,\ldots,v_n\in \mathbb R_{\ge0}^{\calX(1)}$ satisfying squared triangle inequality \eqref{eq:triangleinequality} such that:
\begin{enumerate}[label=(\roman*),leftmargin=2.5em]
    \item \label{item-i}
    $\mathbb E_{i,j}|\langle v_i,v_j\rangle|\le \frac{\Delta}{n}$;
    \item \label{item-ii} for every $0<\beta<1$ and every two disjoint sets of vectors 
    $A,B$ with $|A|,|B|\ge \beta n$, there exist
    $v_i\in A$ and $v_j\in B$ such that
    \[
        \|v_i-v_j\|_2^2
        \le
        \frac{2}{\beta\sqrt{ \alpha\cdot d}}.
    \]
\end{enumerate}
\end{theorem}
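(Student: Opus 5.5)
The plan is to use the facets of $\calX$ themselves as the vectors, normalized indicator vectors, so that squared distances become normalized symmetric differences. Then (i) reduces to a degree count and (ii) to a Poincaré inequality for the down-up walk applied to a distance function.

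\textbf{Construction and triangle inequality.} For each facet $\sigma\in\calX(d)$ set $v_\sigma=\frac{1}{\sqrt d}\1_\sigma\in\R_{\ge0}^{\calX(1)}$. These are nonnegative unit vectors with $\langle v_\sigma,v_\tau\rangle=|\sigma\cap\tau|/d$. Hence
\[
\norm{v_\sigma-v_\tau}^2=2-\frac{2|\sigma\cap\tau|}{d}=\frac{|\sigma\Delta\tau|}{d}.
\]
The symmetric difference $|\sigma\Delta\tau|=\norm{\1_\sigma-\1_\tau}_1$ is an $\ell_1$ metric, so the squared triangle inequality \eqref{eq:triangleinequality} holds immediately.

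\textbf{Item (i).} Fix $\sigma$. Then
\[
\sum_\tau|\sigma\cap\tau|=\sum_{x\in\sigma}|\{\tau\in\calX(d):x\in\tau\}|\le d\Delta,
\]
so $\sum_\tau\langle v_\sigma,v_\tau\rangle\le\Delta$. Averaging over $\sigma$ and dividing by $n$ gives $\E_{i,j}|\langle v_i,v_j\rangle|\le\Delta/n$, since all inner products are nonnegative.

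\textbf{Item (ii).} Suppose for contradiction that $A,B$ are disjoint with $|A|,|B|\ge\beta n$ and $\norm{v_\sigma-v_\tau}^2>\eta$ for all $\sigma\in A$, $\tau\in B$. Equivalently, $|\sigma\Delta\tau|>\eta d$ for all such pairs. Define the test function
\[
f(\sigma)=\min_{\tau\in A}\tfrac12|\sigma\Delta\tau|.
\]
Then $f=0$ on $A$ and $f>\eta d/2$ on $B$. One down-up step changes $\sigma$ by swapping at most one vertex, so each $|\sigma\Delta\tau|$ moves by at most $2$ and $f$ is $1$-Lipschitz along steps. Consequently
\[
\cE_{P^\vee_d}(f,f)\le\tfrac12.
\]
For the variance, recall that $\mu$ is uniform on facets. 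The ordered pairs in $A\times B$ and in $B\times A$ each have mass at least $\beta^2$, which gives
\[
\Var_\mu(f)\ge\tfrac12\cdot2\beta^2(\eta d/2)^2=\frac{\beta^2\eta^2d^2}{4}.
\]
By \cref{lem:rayleigh} and the spectral gap hypothesis, $\Var_\mu(f)\le\frac{d}{\alpha}\cE(f,f)\le\frac{d}{2\alpha}$. Combining the two bounds yields
\[
\eta^2\le\frac{2}{\alpha d\beta^2},\qquad\text{so}\qquad \eta\le\frac{\sqrt2}{\beta\sqrt{\alpha d}}.
\]
Therefore every such $A,B$ contain a pair with $\norm{v_i-v_j}^2\le\frac{\sqrt2}{\beta\sqrt{\alpha d}}\le\frac{2}{\beta\sqrt{\alpha d}}$, which is the claimed bound.

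\textbf{Main obstacle.} The only delicate step is choosing the Lipschitz test function so that the Dirichlet form is $O(1)$ while the variance is of order $d^2$. It is exactly this quadratic gain in $d$, set against the $1/d$ spectral gap, that produces the $1/\sqrt d$ separation bound.
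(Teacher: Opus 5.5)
Your proposal is correct and follows essentially the same route as the paper: normalized facet indicators, a degree double-count for (i), and for (ii) a $1$-Lipschitz distance-to-$A$ test function plugged into the Poincar\'e inequality of the down-up walk. The only differences are cosmetic. You use the untruncated $\min_{\tau\in A}\tfrac12|\sigma\Delta\tau|$ instead of the paper's truncated graph distance, which avoids the conversion to path distance and the $\delta$ fudge. You also keep the factor $\tfrac12$ in the Dirichlet form, which yields the slightly sharper constant $\sqrt2$ in place of $2$.
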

\begin{proof}
List the top faces as $\calX(d)=\{\sigma_1,\ldots,\sigma_n\}$ and set
$v_i=d^{-1/2}\mathbf 1_{\sigma_i}$.  Observe that these vectors satisfy squared triangle inequality (since they are just of $0/1$-vectors up to rescaling). Then $v_i\in\mathbb R_{\ge0}^{\calX(1)}$
is a unit vector and
\[
        \langle v_i,v_j\rangle=\frac{|\sigma_i\cap\sigma_j|}{d}.
\]

First, we prove \ref{item-i}. For $x\in\calX(1)$, let $\calF_x=\{i:x\in\sigma_i\}$.  Since
$|\calF_x|\le\Delta$,
\[
\frac{1}{n^2} \sum_{i,j=1}^n |\langle v_i,v_j\rangle|
=
\frac1{dn^2}\sum_x |\calF_x|^2
\le
\frac{\Delta}{dn^2}\sum_x|\calF_x|
=
\frac{\Delta}{n},
\]
where the last identity uses that $\sum_x |\calF_x| = n\cdot d$, since every $\sigma_i$ has exactly $d$ many vertices (of $\calX(1)$). 

Now, we prove \ref{item-ii}. 
First, we construct a graph $G$ on vertex set $\calX(d)$ which is the support graph of the down-up walk: Namely, we connect $\sigma,\tau\in \calX(d)$  if $|\tau\Delta\sigma|=2$, i.e., if the walk can jump from $\sigma$ to $\tau$. For any $\sigma,\tau\in \calX(d)$, we let $\dist(\sigma\tau)$ to denote the shortest path distance of $\sigma$ and $\tau$ in $G$.

Fix $0<\beta<1$ and disjoint sets $A,B\subseteq[n]$ with
$|A|,|B|\ge\beta n$.  Suppose, for
contradiction, that $|\sigma\cap\tau|<d(1-\eps)$ for every $\sigma\in A$ and
$\tau\in B$ for  some $0<\eps<1$ that we fix later.  Therefore, for any $\sigma\in A,\tau\in B$, $|\sigma\Delta\tau|> 2\eps \cdot d.$ Consequently, $\dist(\tau,\sigma) > \eps\cdot d$.

Define $f:\calX(d)\to \R_{\geq 0}$ as follows: 
\[
        f(\sigma)=\max\left\{1+\delta-\frac{\dist
        (\sigma,A)}{\eps\cdot d},0\right\},\quad\quad  \forall \sigma\in\calX(d),
\]
where $\dist(\sigma,A)$ is the shortest path distance from $\sigma$ to any facet in $A$, and $\delta>0$ but arbitrarily close to 0.
We show that
\begin{equation}\label{eq:pointcareupper}
    \frac{\calE(f,f)}{\Var(f)} <\frac{1}{\beta^2\cdot (\eps\cdot d)^2}
\end{equation}
First, observe that by the above discussion, $f>1$ on all facets of $A$ and $f=0$ on all facets of $B$. Furthermore, 
$f$ is $\frac{1}{\eps \cdot d}$-Lipschitz with respect to the distance function.  
Since every step of the down-up walk is supported on $G$ and $|f(\sigma)-f(\tau)|\leq 1/(\eps d)$ for any edge $(\sigma,\tau)$. Therefore, 
$$\calE(f,f)\le \frac{1}{(\eps\cdot d)^2}.$$
On the other hand, since $f>1$ on $A$ and $f=0$ on $B$.
\[
        \Var(f)
        =
        \frac12\mathbb E_{\sigma,\tau\sim \calX(d)}
        \bigl(f(\sigma)-f(\tau)\bigr)^2
        \ge \frac{|A|}{|\calX(d)|}\cdot \frac{|B|}{{\calX(d)|}}
        > \beta^2,
\]
  This proves \eqref{eq:pointcareupper}. Therefore, since $f$ is not a constant function, by definition of the Poincar\'e constant, we have 
\[
    \frac{\alpha}{d}\leq \alpha(P^\vee) < \frac{1}{\beta^2(\eps\cdot d)^2}
\]
Now, letting $\eps = \frac{1}{\beta \sqrt{\alpha\cdot d}}$ we get a contradiction.

Thus, there exists $\sigma_i\in A$ and $\sigma_j\in B$ with $|\sigma_i\cap\sigma_j|\ge d(1-\eps)$.
For this pair,
\[
        \|v_i-v_j\|_2^2
        =
        2\left(1-\frac{|\sigma_i\cap\sigma_j|}{d}\right)
        \le
        2 \eps 
        =
        \frac{2}{\beta\sqrt{\alpha\cdot d}}.
\]
\end{proof}

\section{Proof of the Main Result}\label{sec:main-proof}

In this section we prove \cref{thm:main}.

\begin{proof}[Proof of \cref{thm:main}]
For a sufficiently large $k$, let $d=2k+1$ and choose $r$ such that $r^{1-\eps} = d$ for $\eps>0$ a very small constant.
Choose a prime (power)  $ (4r+1)^2\leq q\leq O(r^2)$, and apply
\cref{thm:sparse-hdx} to get a $d$-dimensional $d$ partite complex $\calX$.  Let $n=|\calX(d)| \geq q^{\Omega(r^2)}$. 
Then, the degree of $\calX$ is at most $\Delta\le q^{O(rd)}$.
Since $\lambda_2(P^\vee_d)\leq 1-\frac{1}{e\cdot d}$, by
\cref{thm:abstract-top-face} we obtain nonnegative unit vectors
$v_1,\ldots,v_n$ satisfying squared triangle inequality such that (i)
\[
        \E_{i,j}|\langle v_i,v_j\rangle|
        \le \frac{\Delta}{n}
        \leq 
        n^{-1+o(1)},
\]
where the last inequality uses $n\geq q^{\Omega(r^2)}$, $\Delta\leq q^{O(rd)}=q^{O(r^{2-\eps})}$ for some $\eps>0$. 
Furthermore by (ii) of \cref{thm:abstract-top-face}
for every two disjoint sets of vectors
$A,B$ with $|A|,|B|\ge \beta n$ there are vectors
$v_i\in A$ and $v_j\in B$ with
\[
        \|v_i-v_j\|_2^2\le \frac{2}{\beta\sqrt{ d/e}} = O(\log n)^{-1/4+O(\eps)}.
\]
where the last inequality uses that $n\leq |\calX(1)|\Delta/d=q^{O(r^2)}= q^{O(d^{2(1+2\eps)})} = 2^{O(d^{2(1+2\eps)}\log d)}$.
The statement follows by letting $\eps\to 0$. 
\end{proof}

We remark that, to the best of our knowledge, the construction of \cref{thm:sparse-hdx} is not necessarily tight; If one can get similar guarantees with a top-link expander which has say $q^r$ many vertices and degree at most $q^{\sqrt{r d}}$, then our bounds shall improve to $O(1/\sqrt{\log n})$ implying tightness of \cref{thm:arv}.

\section[Vertex expanders and reweighted L2 mixing]{Vertex Expanders and Reweighted $L_2$ Mixing}\label{sec:vertex-appendix}

In this section we prove \cref{thm:vertex-mixing}.  The reduction was already known by experts; we just explain it here for completeness.
The construction starts from the close-pair graph of the vectors in \cref{thm:main}.  
First, we use a simple peeling argument to turn this into an (induced) vertex expander.  Then, we use a trace obstruction: if a reweighting supported on this graph mixed too quickly, it would contradict the small average Gram mass of the vectors.
\begin{lemma}[Peeling]\label{lem:peeling}
For every $0<h<1$ there is $\beta>0$ such that the following holds: Given  an
$n$-vertex graph $G=(V,E)$ such that for any two disjoint sets $A,B$ with
$|A|,|B|\ge \beta n$ there is at least one edge in the induced cut $(A,B)$, $G$ contains an induced
subgraph $H$ on at least $n/2$ vertices with vertex expansion at least $h$.
\end{lemma}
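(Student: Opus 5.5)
We plan a greedy peeling argument that repeatedly removes sets with small vertex boundary. Fix $0<h<1$ and set $\beta=\frac{1-h}{4}$, or any small enough constant depending on $h$. Start with $V_0=V$ and $H_0=G[V_0]$. As long as the current induced subgraph $H_t=G[V_t]$ has vertex expansion below $h$, pick a violating set $S_t\subseteq V_t$ with $|S_t|\le |V_t|/2$ and $|N_{H_t}(S_t)|<h|S_t|$. Then remove $S_t$, setting $V_{t+1}=V_t\setminus S_t$. Let $R=\bigcup_t S_t$ be the total removed set. The process ends either when $H_t$ is an $h$-vertex-expander, or when it is forced to stop because $|R|$ has become large. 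The goal is to show that $|R|$ never exceeds roughly $n/2$, so that the final $H$ has at least $n/2$ vertices.

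The key invariant is that the removed set has small boundary in $G$: $|N_G(R)|< h|R|$, where the boundary is measured in the whole graph. To see this, note that when $S_t$ is removed, its neighbors outside $V_t$ lie in $R_{t}=\bigcup_{s<t}S_s$, which is not part of the boundary of $R_{t+1}$. So $N_G(R_{t+1})\subseteq N_G(R_t)\cup N_{H_t}(S_t)$. Summing gives $|N_G(R_{t+1})|\le |N_G(R_t)|+|N_{H_t}(S_t)|<h|R_t|+h|S_t|=h|R_{t+1}|$. With this invariant, suppose $|R|$ first reaches size at least $\beta n$. Take $A=R$ and $B=V\setminus(R\cup N_G(R))$. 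There are no edges between $A$ and $B$ by construction, so the hypothesis forces $|B|<\beta n$. Hence $n-|R|-h|R|<\beta n$, which gives $|R|>\frac{(1-\beta)n}{1+h}$.

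The main obstacle is ruling out a single step that jumps $|R|$ from below $\beta n$ to something huge. I handle this using the size bound on each removed set. Consider the first time $t$ with $|R_{t+1}|\ge\beta n$. Then $|R_{t+1}|<\beta n+|V_t|/2\le \beta n+\frac{n-|R_t|}{2}$, so $|R_{t+1}|\le \beta n+ n/2$. If this stopping time occurs, combining with the previous lower bound gives $\frac{(1-\beta)}{1+h}< \beta+\frac12$. This fails whenever $h<1$ and $\beta$ is small enough that $\frac{1-\beta}{1+h}\ge \frac12+\beta$, that is, $\beta\le \frac{1-h}{2(2+h)}$. With $\beta$ chosen this way, $|R|$ stays below $\beta n\le n/2$ throughout.

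Finally, the process terminates because $V_t$ shrinks strictly at each step. At termination $H=G[V\setminus R]$ has vertex expansion at least $h$ and at least $(1-\beta)n\ge n/2$ vertices. I also need to check the degenerate edge cases: the violating set $S_t$ must be nonempty, and the case where $B$ is empty is fine since the hypothesis only needs $|B|<\beta n$. All constants depend only on $h$, as required.
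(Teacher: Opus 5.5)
Your proof is correct, but it organizes the peeling differently from the paper.

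\textbf{The paper's route.} It only peels sets $U_i$ of size at most $\beta n$, and deletes each one \emph{together with} its neighborhood $N(U_i)\cap W_i$. This makes $\bigcup_i U_i$ edge-free toward the survivor set $W$ outright. The hypothesis, applied with $A=\bigcup_i U_i$ and $B=W$, then gives $|W|\ge n/2$. A second, separate application, with $A=U$ and $B=W\setminus U\setminus N(U)$, rules out non-expanding sets $U$ of size between $\beta n$ and $|W|/2$ in the final graph.

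\textbf{Your route.} You peel arbitrary violating sets of size up to $|V_t|/2$, keep their neighborhoods, and carry the cumulative invariant $|N_G(R)|<h|R|$. The hypothesis then forbids a set with small boundary from having size in the window $[\beta n,\,(1-\beta)n/(1+h)]$. The halving constraint on $|S_t|$ prevents $R$ from jumping over that window.

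\textbf{Trade-offs.} Your version is a single unified argument with no separate case for large sets. It also gives a stronger conclusion, $|H|\ge(1-\beta)n$ rather than $n/2$. The paper's version avoids the boundary bookkeeping, since removing neighborhoods makes the isolation immediate, at the price of possibly discarding half the vertices.

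\textbf{A minor inconsistency.} Your opening choice $\beta=(1-h)/4$ does not satisfy the constraint $\beta\le\frac{1-h}{2(2+h)}$ that you later derive. It would still suffice if you used the sharper estimate $|R_{t+1}|\le (n+|R_t|)/2<\frac{(1+\beta)n}{2}$, which only requires $\beta\le\frac{1-h}{3+h}$.
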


\begin{proof}
 Let $\beta<1/4$ be a sufficiently small parameter that we choose later.
Starting with $W_0=V$, repeatedly remove a nonempty set
$U_i\subseteq W_i$ with $|U_i|\le \beta n$ and
$|N(U_i)\cap W_i|<h\cdot |U_i|$, together with $N(U_i)\cap W_i$. We stop  if no such  set exists or if $|W_i|<n/2$. We let $W$ be the final set.

First, we show that $|W|\ge n/2$. For contradiction suppose $|W|<n/2$. Since we stop the first time $|W_i|<n/2$, and since in each step we delete at most $(1+h)n\beta$ vertices, we must have $|W|\geq n(1/2-2\beta)\geq n/4$. On the other hand, by construction the deleted sets $U_i$'s have no edges to $W$ and the sum of their sizes is at least $n/4$ (since $h<1$). 
Letting $A$ be the union of deleted $U_i$'s and $B=W$
gives a contradiction.

Now, assume $|W|\geq n/2$; We claim that the induced graph $G[W]$ satisfies the lemma's conclusion. First, since $|W|\geq n/2$, the algorithm stops because every set of size at most $\beta n$ has vertex expansion at least $h$ in $G[W]$. To finish the proof, consider a set $U$ with $\beta n<|U|\le |W|/2$.  If
$|N(U)\cap W|<h\cdot |U|$, then $W\smallsetminus U\smallsetminus N(U)$ has size
at least $(1-h)\frac{|W|}{2}\ge \frac{1-h}{4}n$.
Letting $\beta\leq(1-h)/4$, the sets $A=U,B=W\smallsetminus U\smallsetminus N(U)$ have size at least $\beta n$
but no edges connects them which is a contradiction.
\end{proof}

\begin{lemma}[Trace Obstruction]\label{lem:trace}
Let $v_1,\ldots,v_n$ be unit vectors with Gram matrix $A$, i.e., $A_{i,j}=\langle v_i,v_j\rangle$.
Let $0\le \delta<2$, and let $G_\delta$ be the graph joining pairs with $\|v_i-v_j\|_2^2\le \delta$.  If $P$ is a symmetric stochastic matrix with nonnegative entries supported on $G_\delta$ together with self-loops, then
\[
        \left(1-\frac{\delta}{2}\right)^{2t}\le \frac1n \left(\norm{A}_F^2\cdot  \tr(P^{4t})\right)^{1/2} \,.
\]
\end{lemma}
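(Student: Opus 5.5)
The plan is to compare the quantity $\tr(A P^{2t})$ from two directions: a lower bound coming from $P$ moving only between nearby vectors, and an upper bound from Cauchy--Schwarz in the Frobenius inner product.

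\emph{Lower bound.} Write $\tr(AP^{2t}) = \sum_{i,j} A_{ij}(P^{2t})_{ji}$. Since $P$ is symmetric, nonnegative and stochastic, each $P^{m}$ is also symmetric and stochastic. Then
\[
\tr(AP^{2t}) = \sum_i \E_{j\sim P^{2t}(i,\cdot)}\langle v_i,v_j\rangle .
\]
It is more convenient to split $P^{2t}=P^tP^t$ and use symmetry:
\[
\tr(AP^{2t}) = \sum_{k}\Bigl\|\sum_i P^t(k,i)\, v_i\Bigr\|^2 .
\]
Set $u_k = \sum_i P^t(k,i)v_i$. I will show $\langle u_k, v_k\rangle \ge (1-\delta/2)^t$, which gives $\|u_k\|\ge (1-\delta/2)^t$ because $v_k$ is a unit vector.

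\emph{The main obstacle.} Along an edge of $G_\delta$ we have $\langle v_i,v_j\rangle = 1-\|v_i-v_j\|^2/2 \ge 1-\delta/2$. For a self-loop the inner product is $1$. Along a $t$-step path, however, inner products do not simply multiply. The remedy is to work with a one-step inequality that iterates.

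Let $c = 1-\delta/2 > 0$. For any vector $w$ and any index $i$, I would like the bound
\[
\langle \textstyle\sum_j P(i,j)v_j, w\rangle \text{ is large whenever } \langle v_i,w\rangle \text{ is large},
\]
but this fails for arbitrary $w$. A cleaner route is to take the path expansion at face value: a single step gives $\langle u^{(1)}_k, v_k\rangle \ge c$, but longer paths require multiplicativity, which we do not have.

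Instead I would use the spectral form
\[
\tr(AP^{2t}) = \sum_\ell \lambda_\ell^{2t}\,\phi_\ell^T A\phi_\ell,
\]
where $P\phi_\ell=\lambda_\ell\phi_\ell$. Together with $\sum_\ell \phi_\ell^T A \phi_\ell = \tr A = n$ and $A \succeq 0$, Jensen's inequality (since $x\mapsto x^{2t}$ is convex) gives
\[
\tr(AP^{2t}) \ge n\Bigl(\frac{\tr(AP)}{n}\Bigr)^{2t}, \qquad \text{provided } \tr(AP)\ge 0 .
\]
It is tempting to work with $P^2$ rather than $P$ to avoid sign issues, since $P^2$ has nonnegative eigenvalues. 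But $P^2$ is supported on $G_{4\delta}$-type pairs, which is worse. So I would stay with $P$ itself: use the weights $\phi_\ell^TA\phi_\ell/n\ge 0$, which sum to $1$, and apply Jensen to $x\mapsto x^{2t}$, which is convex on all of $\mathbb R$. The remaining quantity is
\[
\tr(AP)=\sum_{i,j}P(i,j)\langle v_i,v_j\rangle \ge n c,
\]
since every pair in the support of $P$ has inner product at least $c$ and each row of $P$ sums to $1$. Combining these yields $\tr(AP^{2t})\ge n c^{2t}$.

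\emph{Upper bound and conclusion.} By Cauchy--Schwarz in the Frobenius inner product,
\[
\tr(AP^{2t}) \le \|A\|_F\,\|P^{2t}\|_F = \|A\|_F\,\tr(P^{4t})^{1/2},
\]
where the last equality uses that $P$ is symmetric. Putting the two bounds together gives
\[
nc^{2t}\le \bigl(\|A\|_F^2\tr(P^{4t})\bigr)^{1/2},
\]
and dividing by $n$ gives exactly the statement.
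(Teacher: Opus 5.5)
Your argument is correct and is essentially the paper's proof. Like the paper, you lower-bound $\tr(AP)\ge n(1-\delta/2)$ from the support of $P$, then apply Jensen to $x\mapsto x^{2t}$ with the probability weights $\phi_\ell^{\mathsf T}A\phi_\ell/n$ to get $\tr(AP^{2t})\ge n(1-\delta/2)^{2t}$, and finish with Frobenius Cauchy--Schwarz. The only cosmetic difference is that you apply Jensen directly on all of $\mathbb R$, so your proviso $\tr(AP)\ge 0$ is not needed for that step, whereas the paper first passes through $|\lambda_\ell|$; your abandoned path-expansion detour can simply be deleted.
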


\begin{proof}
First notice if $\norm{v_i-v_j}^2\leq \delta$, then $\langle v_i,v_j\rangle\geq \1-\delta/2$, since $v_i,v_j$ are unit-norm. Therefore, since $P$ is supported on $G_\delta$,
\begin{equation}\label{eq:trAP}
\frac1 n\tr(AP)\ge 1-\delta/2.
\end{equation}
Since $P$ is symmetric, by the spectral theorem we can write, $P=\sum_{i=1}^n\lambda_iu_iu_i^{\mathsf T}$ where $u_1,\dots,u_n$ form an orthonormal family of vectors.  We write,
\[
\left(\frac1n\tr(AP)\right)^{2t}
= \left(\frac1n \sum_{i=1}^n \lambda_i u_i^TA u_i\right)^{2t} \leq \left(\frac1n \sum_{i=1}^n |\lambda_i| u_i^TA u_i\right)^{2t}
\]
Since $A$ is PSD, $u_i^{\mathsf T} A u_i \geq 0$ for all $i$. Furthermore, 
$$\sum_{i=1}^N u_i^{\mathsf T} A u_i \sum_{i=1}^n \tr(u_i u_i^{\mathsf T} A)=\tr(A)=n.$$ 
The last identity uses that $A_{i,i}=\norm{v_i}^2=1$ for all $i$.
 Thus, it follows that $\{\frac1n u_i^{\mathsf T}A u_i\}_{1\leq i\leq n}$ form a probability distribution. Having this we can apply Jensen's inequality to get
$$ \left(\frac1n\tr(AP)\right)^{2t} \leq \left(\frac1n \sum_{i=1}^n |\lambda_i|^{2t} u_i^{\mathsf T}A u_i\right) = \frac1n \tr(AP^{2t}) \underset{\text{Cauchy-Schwarz}}{\le}\frac1n \sqrt{\tr(A^2)\tr(P^{4t})},
$$
where in the equality we use that $|\lambda_i|^{2t}=\lambda_i^{2t}$
Lastly, the statement follows since $\tr(A^2)=\norm{A}_F^2$ putting together with \eqref{eq:trAP}.
\end{proof}

\begin{proof}[Proof of \cref{thm:vertex-mixing}]
Given $h_*$, let $\beta>0$  be the constant given by \cref{lem:peeling}
with $h=h_*$. By \cref{thm:main} with density parameter $\beta$ we obtain  unit vectors
$v_1,\ldots,v_n$ such that (i)  $\sum_{i,j}|\langle v_i,v_j\rangle|\leq n^{1+o(1)},$
and (ii) for every
two disjoint subsets $A,B$ of vectors of size at least $\beta n$ there is a pair 
of vectors $v_i\in A,v_j\in B$ with
\[
    \norm{v_i-v_j}_2^2\leq (\log n)^{-1/4+o(1)}=:\delta.
\]

Let $G$ be the graph on $[n]$ joining $i,j$ if and only if
$\|v_i-v_j\|_2^2\le\delta$.  Then $G$ has no two disjoint sets of size at
least $\beta n$ with no edges between them.  So, \cref{lem:peeling}, $G$
contains an induced subgraph $H=G[W]$ with $m:=|W|\ge n/2$ vertices and
$h(H)\ge h_*$.

Restrict the vectors to $W$.   we have 
$$ \sum_{i,j\in W}|\langle v_i,v_j\rangle|\leq n^{1+o(1)}.$$
Also, every edge of $H$ joins two vectors at squared
distance at most $\delta$.

Let $P\in \R^{W\times W}$  be an arbitrary symmetric stochastic matrix  supported on $G[W]$ together with self-loops.  By \cref{lem:trace} with
$\delta$ gives
\[
        \left(1-\frac{\delta}{2}\right)^{4t}
        \le \frac1{m^2}\tr(P^{4t})\cdot \sum_{i,j\in W}\langle v_i,v_j\rangle^2 \leq \frac1{m^2} \tr(P^{4t})\cdot \sum_{i,j}|\langle v_i,v_j\rangle|\leq \frac1{m^2}\tr(P^{4t})\cdot n^{1+o(1)},
\]
where the second inequality uses that $\norm{v_i}=1$ for all $i$.
Simplifying and  using $\delta=\log^{-1/4+o(1)}n$ we obtain that 
$$ \tr(P^{4t})\underset{1-x\geq e^{-2x}\text{ for }0<x<1/2}\geq m^2 n^{-(1+o(1))}e^{-t\log^{-1/4+o(1)}n} \geq m^{1-o(1)}e^{-t\log^{-1/4+o(1)}m}$$
where the last inequality uses $m\geq n/2$. This completes the proof. 
\end{proof}

\bibliographystyle{alpha}
\bibliography{main}

\end{document}